\documentclass[runningheads]{llncs}
\usepackage{amsmath}
\usepackage[T1]{fontenc}
% T1 fonts will be used to generate the final print and online PDFs,
% so please use T1 fonts in your manuscript whenever possible.
% Other font encondings may result in incorrect characters.
%
\usepackage{graphicx}
% Used for displaying a sample figure. If possible, figure files should
% be included in EPS format.

\usepackage{hyperref}       % hyperlinks
%
% If you use the hyperref package, please uncomment the following two lines
% to display URLs in blue roman font according to Springer's eBook style:

\usepackage{color}

\urlstyle{rm}
\usepackage[utf8]{inputenc} % allow utf-8 input
\usepackage{booktabs}       % professional-quality tables
\usepackage{amsfonts}       % blackboard math symbols
\usepackage{microtype}      % microtypography
\usepackage{physics}

\usepackage{comment}

% \usepackage{natbib}
% % MUST BE COMMENT OUT FOR FINAL VERSION
% \usepackage{fontspec} 
% \usepackage{xeCJK} 
% \setCJKmainfont{IPAexMincho} % 
% MUST BE COMMENT OUT FOR FINAL VERSION

\begin{document}

\title{Diverse Neural Sequences in QIF Networks: An Analytically Tractable Framework for Synfire Chains and Hippocampal Replay}
\titlerunning{Diverse Neural Sequences in QIF Networks}
% If the paper title is too long for the running head, you can set
% an abbreviated paper title here
%
\author{Genki~Shimizu\inst{1,2}\orcidID{0000-0002-1393-8060} \and 
Taro~Toyoizumi\inst{1,2}\orcidID{0000-0001-5444-8829}}

\authorrunning{G. Shimizu and T. Toyoizumi}
% First names are abbreviated in the running head.
% If there are more than two authors, 'et al.' is used.
%
\institute{RIKEN Center for Brain Science\\
2-1 Hirosawa, Wako, Saitama, Japan  \\
\email{\{genki.shimizu@a.riken.jp, taro.toyoizumi@riken.jp\}}
\and
The University of Tokyo \\
7-3-1 Hongo, Bunkyo-city, Tokyo, Japan}

\maketitle              % typeset the header of the contribution

\begin{abstract}
Sequential neural activity is fundamental to cognition, yet how diverse sequences are recalled under biological constraints remains a key question. Existing models often struggle to balance biophysical realism and analytical tractability.
We address this problem by proposing a parsimonious network of Quadratic Integrate-and-Fire (QIF) neurons with sequences embedded via a temporally asymmetric Hebbian (TAH) rule.
Our findings demonstrate that this single framework robustly reproduces a spectrum of sequential activities, including persistent synfire-like chains and transient, hippocampal replay-like bursts exhibiting intra-ripple frequency accommodation (IFA), all achieved without requiring specialized delay or adaptation mechanisms.
Crucially, we derive exact low-dimensional firing-rate equations (FREs) that provide mechanistic insight, elucidating the bifurcation structure governing these distinct dynamical regimes and explaining their stability.
The model also exhibits strong robustness to synaptic heterogeneity and memory pattern overlap.
These results establish QIF networks with TAH connectivity as an analytically tractable and biologically plausible platform for investigating the emergence, stability, and diversity of sequential neural activity in the brain.
\keywords{Sequential activity \and QIF neuron \and Synfire chain \and Hippocampal replay}
\end{abstract}

%======================================================
%====================================================
\section{Introduction}
\label{sec:intro}
Sequential patterns of spiking activity are ubiquitous in the brain and underpin cognitive functions including decision-making \cite{harvey2012Choicespecific}, motor control \cite{wang2018Flexible}, interval timing \cite{pastalkova2008Internally}, and spatial navigation \cite{lee2002Memory}. They manifest across regions and species, from the millisecond‑precision bursts of HVC neurons that knit together birdsong syllables \cite{hahnloser2002Ultrasparse,fee2004Neural} to the highly compressed hippocampal \emph{replay} of place‑cell trajectories during sharp‑wave ripples (SWRs) \cite{buzsaki2015Hippocampal,olafsdottir2018Role}.

Beyond the sequential ordering, sharp‑wave ripples also display a brief but systematic slowing of their $\sim 200\, \mathrm{Hz}$ oscillation, a phenomenon termed \emph{intra‑ripple frequency accommodation} (IFA). Within a single 50–100ms event the instantaneous frequency typically falls by 30-40\%, and perturbations of perisomatic inhibition strongly modulate this slope \cite{ylinen1995Sharp,nguyen2009Characterizing,sullivan2011Relationships}. 
Although less widely known than replay itself, IFA offers an accessible fingerprint of the circuitry that paces ripples and sets the effective temporal compression of reactivated sequences.

Explaining how rich and precisely timed sequences such as these emerge and remain robust is a long‑standing challenge. 
Existing computational frameworks face trade-offs: 
classical attractor networks often require biologically questionable timescales or adaptation mechanisms \cite{sompolinsky1986Temporal,suri2002Spike}; 
conventional models of synfire chains primarily emphasize precise, stereotypical propagation of spikes over reproducing diverse temporal patterns in neural sequences, and can be overly sensitive to noise and parameter tuning \cite{diesmann1999Stable,long2010Support}; 
standard rate-based models may overlook critical spiking dynamics like transient population synchrony \cite{schaffer2013ComplexValued,gillett2020Characteristics}; 
and detailed biophysical simulations trade analytical tractability for realism.

To address this challenge, we propose a parsimonious framework based on networks of Quadratic Integrate-and-Fire (QIF) neurons, which capture Type~I excitability and are amenable to exact low-dimensional reduction via the Ott-Antonsen ansatz \cite{montbrio2015Macroscopic}. 
We demonstrate that these QIF networks, when endowed with a temporally asymmetric Hebbian (TAH) rule, can robustly generate a spectrum of sequential activities. 
Specifically, our model reproduces both sustained, precisely timed synfire-like chains and transient, hippocampal replay-like sequences that exhibit IFA, all without requiring specialized synaptic delay or slow adaptation currents. 
Furthermore, the derived firing-rate equations (FREs) provide a powerful analytical handle, elucidating the bifurcation structure that governs these distinct dynamical regimes and the mechanisms ensuring their stability. We show that these capabilities are preserved even under significant synaptic heterogeneity and memory pattern overlap. This work positions QIF networks as a unifying and analytically tractable platform for exploring the emergence and characteristics of biologically plausible neural sequences.

%======================================================
\section{Model}
\label{sec:model}

\subsection{Quadratic Integrate-and-Fire neurons}
We base our model on the QIF neuron, a minimal spiking unit capturing Type I excitability near threshold \cite{latham2000Intrinsic,hansel2001Existence}. Each neuron’s membrane potential evolves as:
\begin{equation}
  \dot V = V^2 + \eta + I(t),
  \label{eq:QIF}
\end{equation}
where $\eta$ is its intrinsic excitability and $I(t)$ denotes external input. A spike is emitted as $V\to+\infty$, with an instantaneous reset to $V=-\infty$.

\subsection{Network architecture and learning}
\label{subsec:network}
We store $P$ binary patterns $\{\xi_i^\mu\}$ ($\mu=1,\dots,P$) in synaptic weights via (conventional) Hebbian rule and temporally asymmetric Hebbian (TAH) rule:
\begin{subequations}
\label{eq:weights}
\begin{gather}
  W_{ij}^{\rm Hebb} \propto \sum_{\mu=1}^P \xi_i^\mu \xi_j^\mu, ~~~
  W_{ij}^{\rm TAH}  \propto \sum_{\mu=1}^{P} \xi_i^{\mu+1}\xi_j^\mu, ~~~
  W_{ij}^{\rm inh} \propto  \frac{1}{P} \left(\sum_{\mu=1}^P \xi_i^\mu\right)\left( \sum_{\mu'=1}^P \xi_j^{\mu '}\right)
  \label{eq:weight_component}
  \\
  W_{ij} = W_{ij}^{\rm Hebb} + W_{ij}^{\rm TAH} - W_{ij}^{\rm inh}.
\end{gather}
\end{subequations}
Here, $W^{\rm Hebb}$ creates attractors for each pattern, and $W^{\rm TAH}$ enforces progression from pattern $\mu$ to $\mu+1$.
We also included global inhibitory feedback $W_{ij}^{\rm inh}$ for stability and the ease of analysis.

The dynamics of each neuron in the network are determined by 
\begin{equation}
  \dot V_i = V_i^2 + \eta_i + \sum_{j,f} W_{ij}\,\delta(t - t_j^f - \epsilon) + I_{\rm ext}(t),
  \label{eq:network_dynamics}
\end{equation}
where $t_j^f$ denotes the $f$‐th past spike of neuron $j$, $\epsilon$ represents synaptic conductance delay, and $I_{\rm ext}$ is an external drive.
We set $\epsilon \to 0$ for theoretical analysis and $\epsilon = 10^{-8}$ for numerical simulation.

\subsection{Reduction to firing-rate equations}
For ease of analysis, we consider the sparse-memory limit, 
where overlap between memory patterns will be ignorable so that each neuron participates in at most one memory.
Here, the network decomposes into $P$ disjoint assemblies (indexed by $k$), and synaptic weights induced by memory patterns \eqref{eq:weight_component} become connections between populations.
Following \cite{montbrio2015Macroscopic}, we further apply the Ott–Antonsen reduction in the thermodynamic limit, which yield exact firing‐rate equations (FREs) 
for each population’s firing rate $r_k$ and median voltage $v_k$.\footnote{For the detail of derivation, see Appendix \ref{appendix:FREderivation}.}
\begin{subequations}
\label{eq:FRE}
\begin{align}
  \dot r_k &= \frac{\Delta}{\pi} + 2 r_k v_k,  \label{eq:FRE_r}\\
  \dot v_k &= v_k^2 + \bar\eta + J_1(r_{k-1}-\bar r) + J_2\,r_k - (\pi r_k)^2
  \label{eq:FRE_v}
\end{align}
\end{subequations}
where $J_1$ and $J_2$ are the sequential and recurrent coupling strengths, respectively; $\bar r=\frac1K\sum_k r_k$ is the mean firing rate across all populations; and $\bar\eta$ and $\Delta$ denote the median and half-width of the (Lorentzian) distribution $\eta_i$.
These FREs provide direct analytical insight into the dynamics of sequential activity in the network.

\section{Results}
\label{sec:results}

Although with minimal structure, the QIF network with TAH rule generates robust sequential activity under varied conditions, echoing spatiotemporal complexity of neural sequences in the brain. 
Furthermore, the reduction to FREs provides theoretical insights under the mechanism of sequential activities.
In the following section, we first give linear-stability analysis of the network dynamics.
We then highlight three key phenomena in our model: stable synfire-chain propagation, replay-like transient sequences, and robustness to heterogeneity and pattern overlap.

\subsection{Fixed points and stability of the network}
\label{subsec:stability}
Since feedforward coupling $J_1$ preserves the spatially homogeneous manifold, any equilibrium of the single‐population FREs
\begin{subequations}\label{eq:FRE_single}
\begin{align}
  \dot r &= \tfrac{\Delta}{\pi} + 2rv,\\
  \dot v &= v^2 + \bar\eta + J_2r - (\pi r)^2
\end{align}
\end{subequations}
remains an equilibrium of the full network.  System~\eqref{eq:FRE_single} is bistable\cite{montbrio2015Macroscopic}, with a low‐rate node $(r^*_{\text{node}},v^*_{\text{node}})$ and a high‐rate focus $(r^*_{\text{fc}},v^*_{\text{fc}})$ (Fig.~\ref{fig:single-FRE}A).  Linearizing the $2P$‐dimensional FREs around a homogeneous fixed point yields $P$ independent $2\times2$ blocks with eigenvalues
\begin{equation}\label{eq:eigenvalue}
  \lambda_{k}^{\pm} =
  \begin{cases}
    2v^{\ast} \;\pm\; \sqrt{2 r^{\ast}\bigl(J_{2}-2\pi^{2}r^{\ast}\bigr)}, & k=0,\\[6pt]
    2v^{\ast} \;\pm\; \sqrt{2 r^{\ast}\bigl(J_{2}-2\pi^{2}r^{\ast}+J_{1}\,\zeta_{P}^{k}\bigr)}, & k=1,\dots,P-1.
  \end{cases}
\end{equation}
where $\xi_{P}=e^{2\pi i/P}$. $k=0$ recovers the single‐population mode and $k\ge1$ captures sequential perturbations via $J_1\,\zeta_P^k$ (see Appendix~\ref{appendix:eigenvalues}).  In the $P\to\infty$ limit, the leading growth rate becomes
\begin{align}
\label{eq:growthrate}
2v^* + \sqrt{2r^*}\,
\begin{cases}
\sqrt{J_1 + J_2 - 2\pi^2r^*}, & J_1 + 2J_2 \ge 4\pi^2r^*,\\
\frac{J_1}{2}\,|J_2 - 2\pi^2r^*|^{-1/2}, & \text{otherwise}.
\end{cases}
\end{align}
Thus, increasing $J_1$ first destabilizes the high‐rate focus while the low‐rate node remains stable (Fig.~\ref{fig:single-FRE}B,C), defining the regime for transient replay.
This bifurcation structure predicts that a slow modulatory input can drive transitions between the focus and nodes, enabling replay-like damped oscillation during the depolarized up-state and suppressing activity in the hyperpolarized down-state.

\begin{figure}[t]
    \centering
    \includegraphics[width=\linewidth]{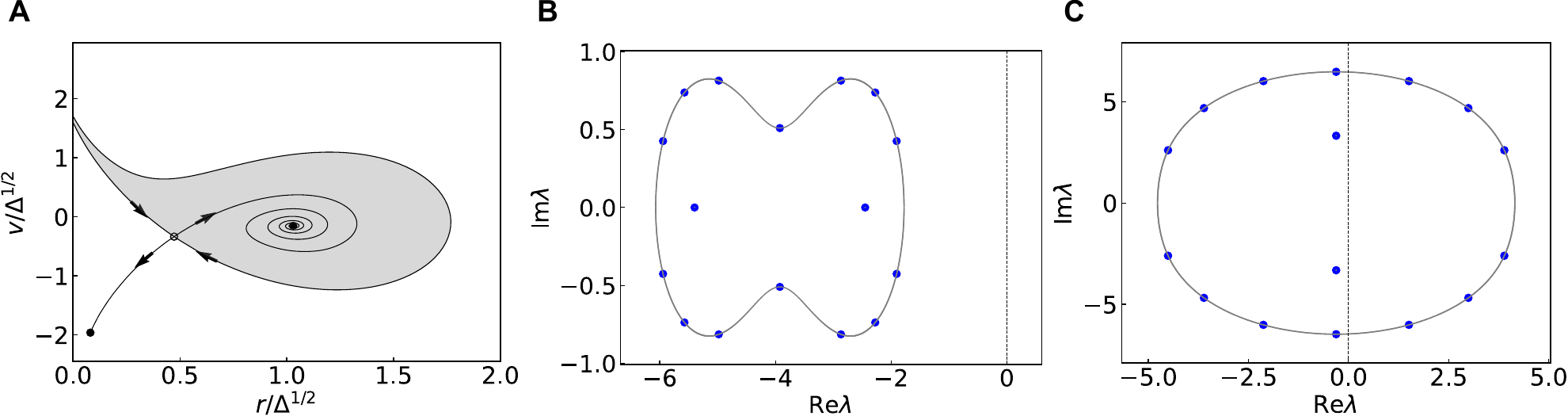}
    \caption{\textbf{Phase-plane structure and linear stability of the firing-rate equations.}
    \textbf{A}~ Single-population FREs~\eqref{eq:FRE_single} are bistable, with a low-rate node $(r^{*}_{\mathrm{node}},v^{*}_{\mathrm{node}})$ (bottom left) and a high-rate focus $(r^{*}_{\mathrm{fc}},v^{*}_{\mathrm{fc}})$ (upper right).  
    \textbf{B}~ Eigenvalue spectrum of the full Jacobian at $(r_k, v_k) = (r^{*}_{\mathrm{fc}},v^{*}_{\mathrm{fc}})$. Solid dots show the discrete spectrum for a finite network with $P=8$ populations; the central dot corresponds to the spatially homogeneous mode ($k=0$), which matches the single-population eigenvalue. The grey dotted curve depicts the continuum envelope obtained as $P\rightarrow\infty$.  
    \textbf{C}~ Same conventions as in (b) but for $(r^{*}_{\mathrm{node}},v^{*}_{\mathrm{node}})$, illustrating that the node remains stable while the focus becomes unstable when $J_{1}\approx J_{2}$.  
    Parameters shared across panels: $J_{1}=15$, $J_{2}=15$, $\bar{\eta}=-5$, $\Delta=1$.}
    \label{fig:single-FRE}
\end{figure}
%======================================================

\subsection{Stable synfire–like propagation}
\label{subsec:synfire}
Starting from the low-rate equilibrium $(r^*_{\mathrm{node}}, v^*_{\mathrm{node}})$, a brief external pulse excites the first population.  
Feedforward connections then recruit each subsequent population in turn, creating a self-sustaining feed-forward chain (Fig.~\ref{fig:synfire}A).  
The FREs predict both burst timing and amplitude with high accuracy, matching full spiking simulations (Fig.~\ref{fig:synfire}B,C).

\begin{figure}[t]
    \centering
    \includegraphics[width=\linewidth]{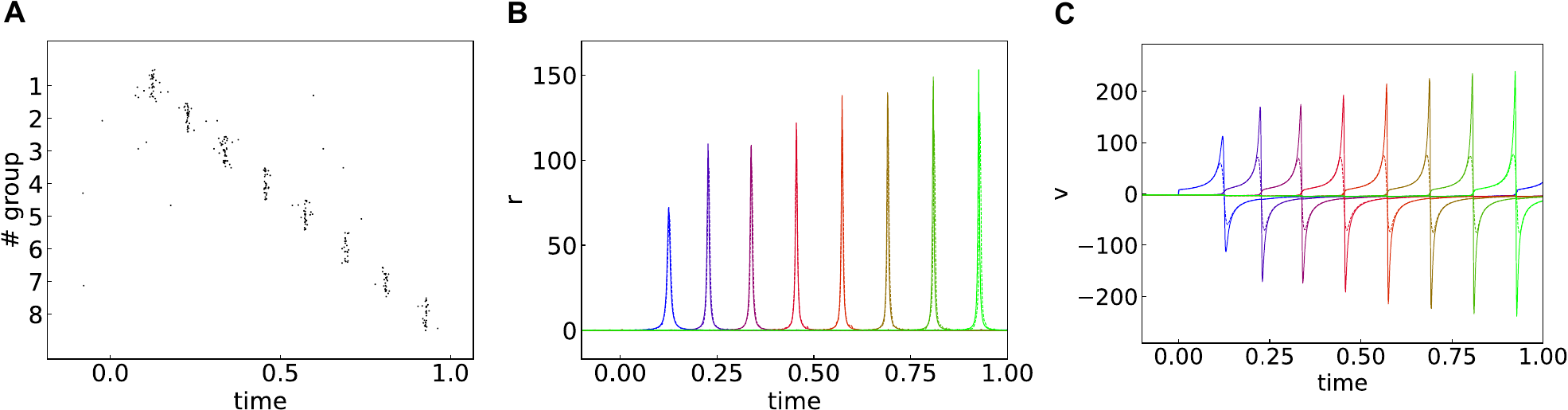}
    \caption{\textbf{Stable synfire propagation in a QIF network with TAH connectivity.}  
    \textbf{A}~ Spike raster: synchronized bursts travel successively among populations.  
    \textbf{B}~ Population firing rate; solid lines are FRE predictions, dashed lines are full simulations.  
    \textbf{C}~ Median membrane potential, same conventions as in B.  
    Parameters: $J_1=15$, $J_2=15$, $\bar{\eta}=-5$, $\Delta=1$, $P=8$.}    
    \label{fig:synfire}
\end{figure}

To clarify why propagation remains stable, we investigated the response of a single population \eqref{eq:FRE_single} to a Gaussian pulse  
\[
I_{\text{ext}}(t)=
A_{\text{in}}\,(2\pi\sigma_{\text{in}}^{2})^{-1/2}
\exp\!\bigl[-t^{2}/(2\sigma_{\text{in}}^{2})\bigr].
\]
Depending on $(A_{\text{in}},\sigma_{\text{in}})$, the collective dynamics fall into three qualitatively distinct regimes:
\begin{enumerate}\setlength\itemsep{2pt}
\item[\textbf{(i)}] \emph{Synchronized burst.}  
    The population emits a sharp pulse, then relaxes back to the low-rate node $(r^{*}_{\mathrm{node}},v^{*}_{\mathrm{node}})$.  
    We characterise the burst by its peak area $A_{\text{out}}$ and width $\sigma_{\text{out}}$.
\item[\textbf{(ii)}] \emph{Weak, quickly vanishing response.}  
    A subthreshold deflection that also returns to the node, again summarised by $(A_{\text{out}},\sigma_{\text{out}})$ with much smaller amplitude.
\item[\textbf{(iii)}] \emph{Transition to sustained activity.}  
    For the input with intermediate strength and/or precision, the trajectory leaves the node’s basin and converges to the high-rate focus $(r^{*}_{\mathrm{fc}},v^{*}_{\mathrm{fc}})$.
\end{enumerate}

For regimes (i) and (ii) the mapping
\[
(A_{\text{in}},\sigma_{\text{in}})\;\longmapsto\;
(A_{\text{out}},\sigma_{\text{out}})
\]
defines a well-behaved vector field on the $(A,\sigma)$ state space (Fig.~\ref{fig:response})
State-space analysis shows that this map has a stable fixed point representing a stereotyped, high-precision burst.  
Because every population in the feed-forward chain receives input that lies inside the basin of this fixed point, each group reproduces essentially the same output, preventing both runaway excitation and decay.  
This mechanism explains the remarkably uniform spike volleys seen in the full network simulation.

\begin{figure}[t]
    \centering
    \includegraphics[width=\linewidth]{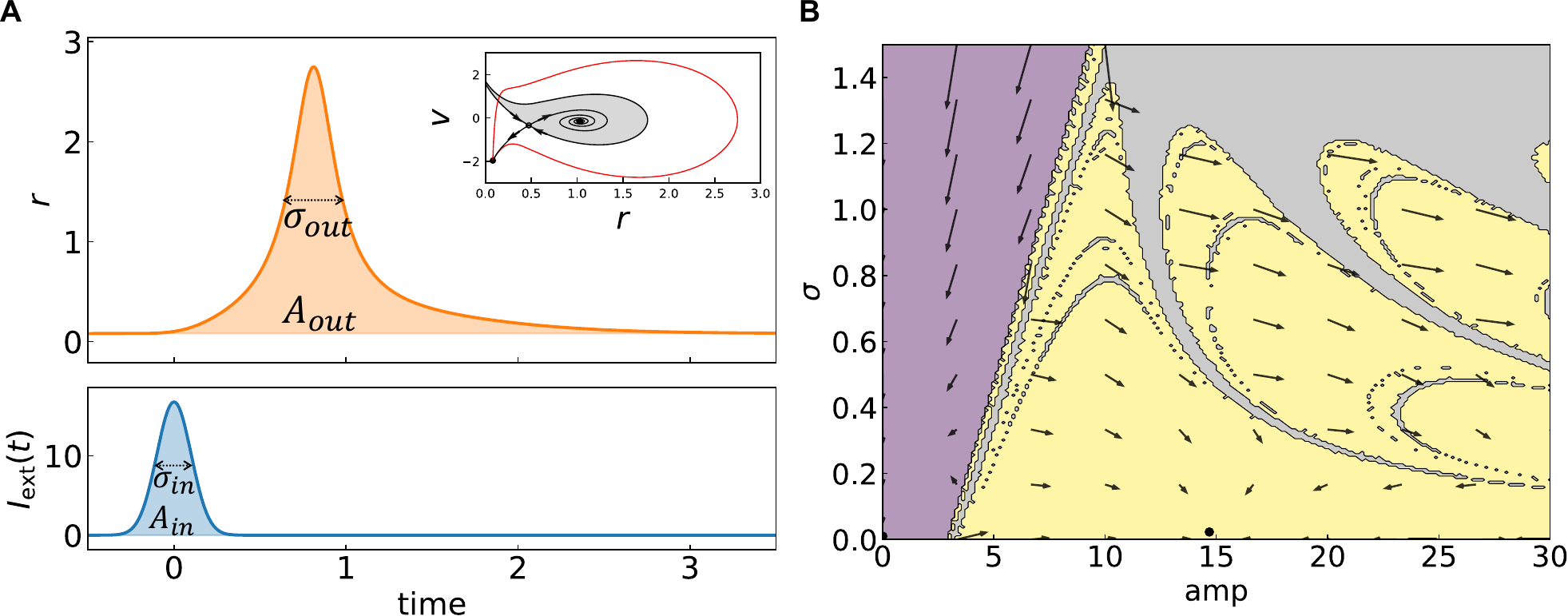}
    \caption{\textbf{Single-population response explains stability of synfire propagation.}  
    \textbf{A}~ Response to a Gaussian input ($A_{\text{in}},\sigma_{\text{in}}$). Before stimulation, the population rested at the low-rate node.  
    \textbf{B}~ State-space analysis of the response. Two fixed points exist: the origin (quiescent state) and a bursting attractor (black dot). Purple and yellow shading indicate their basins of attraction; grey trajectories lead to the high-rate focus. Arrows show the vector field.  
    Parameters: $J_2=15$, $\bar{\eta}=-5$, $\Delta=1$.}
    \label{fig:response}
\end{figure}
%--------------------------------------------------

\subsection{Replay-like transient sequences}
\label{subsec:replay}

We next explore a dynamical regime in which the network shows transient oscillations around the high-rate focus, giving rise to brief episodes of sequential activity that resemble hippocampal replay during sharp-wave ripples (SWRs). 
We therefore set the sequential coupling weak, $J_{1}\ll J_{2}$, so that the focus $(r^{*}_{\mathrm{fc}},v^{*}_{\mathrm{fc}})$ remains linearly stable.  
A slow, subthreshold modulation,
\[
I_{\text{mod}}(t)=I_{0}\bigl[1-\cos(2\pi f_{\text{mod}} t)\bigr],
\]
mimics the depolarising phase of cortical up/down states.  
As the drive increases, the trajectory leaves the low-rate node, executes a spiral excursion near the focus, and returns to the node as the input wanes.

A representative event is shown in Fig.~\ref{fig:ripple-replay}A.  
During each “Up” excursion, the population firing rate oscillates rapidly.  
Within these oscillations, activity sweeps successively through the populations, replaying the stored sequence several times before dying out (Fig.~\ref{fig:ripple-replay}B).  
The low-dimensional FREs predict both the latency to onset and the number of within-burst cycles with high accuracy.

\begin{figure}[t]
    \centering
    \includegraphics[width=\linewidth]{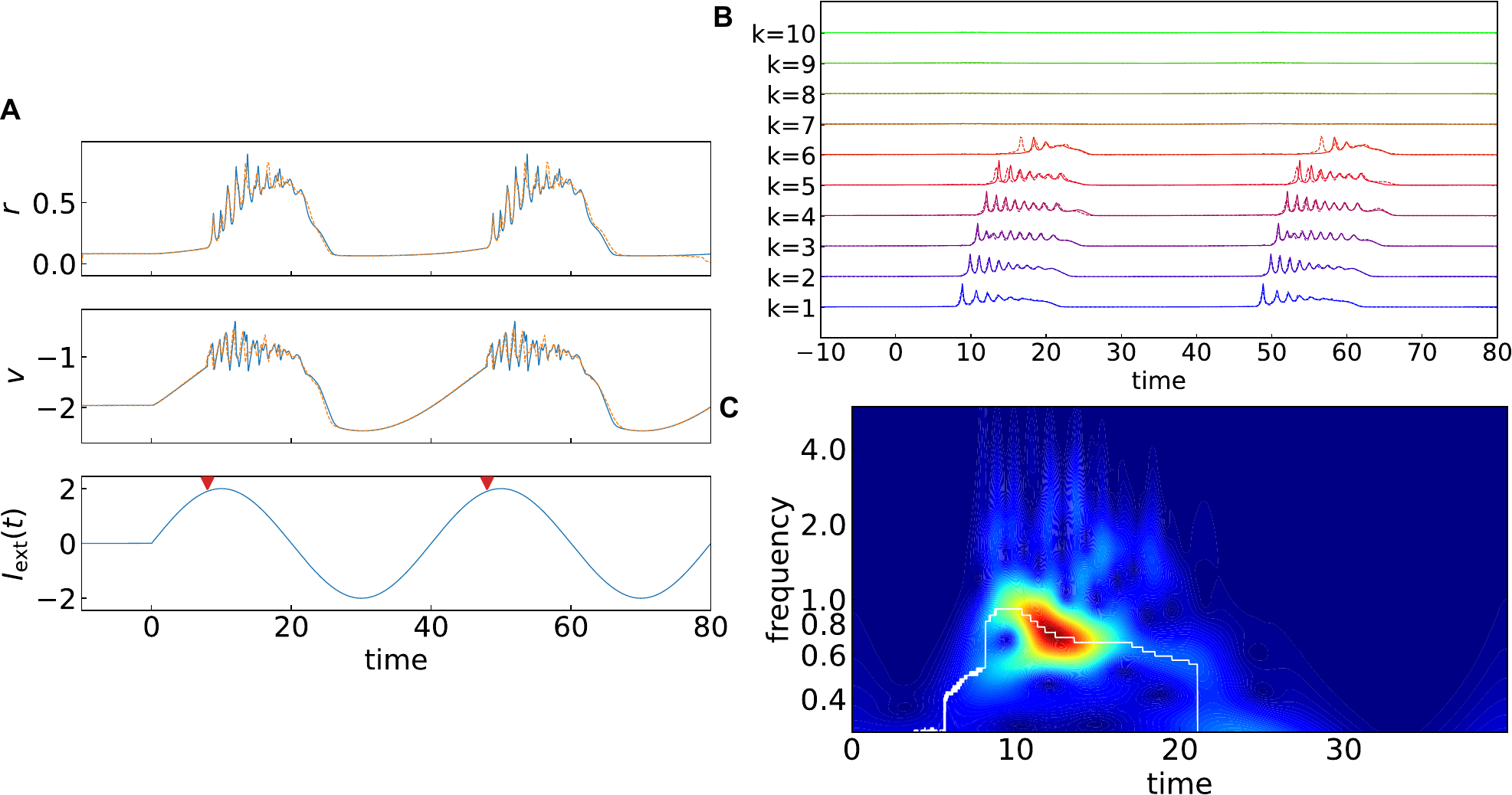}
    \caption{\textbf{Transient replay and ripple oscillations under slow modulatory drive.}
    \textbf{A}~ Representative SWR event. Solid lines: FRE prediction; dashed lines: full simulation.  
    \textbf{B}~ Within each burst, sequential activity propagates several times through the populations before fading (same conventions as in A).  
    \textbf{C}~ Wavelet spectrogram; the white curve traces the instantaneous peak frequency, revealing IFA.
    Parameters: $J_{1}=1.8$, $J_{2}=15$, $\bar{\eta}=-5$, $\Delta=1$, $P=10$.}
    \label{fig:ripple-replay}
\end{figure}

A striking feature of the spectrogram (Fig.~\ref{fig:ripple-replay}C) is the systematic \emph{down-chirp} of ripple frequency, mirroring the intra-ripple frequency accommodation (IFA) observed \emph{in vivo} and \emph{in vitro} across species and brain states\cite{sullivan2011Relationships,ponomarenko2004Multiple,nguyen2009Characterizing}.

These results show that weak sequential coupling combined with slow modulatory drive provides a minimal, analytically tractable mechanism for replay-like bursts with realistic IFA, extending the stable synfire-chain propagation analyzed in Section~\ref{subsec:synfire}.

\subsection{Robustness to heterogeneity and pattern overlap}
\label{subsec:robustness}

Real cortical networks are neither homogeneous nor store perfectly orthogonal memories.  
We therefore tested whether our model’s sequential recall survives two realistic perturbations.

\paragraph{Synaptic heterogeneity.}
Each deterministic weight \(W_{ij}^{\text{det}}\) in Eq.~\eqref{eq:weights} was multiplied by a lognormal factor
\(\exp(\sigma_{\text{syn}}\zeta_{ij})\) with \(\zeta_{ij}\sim\mathcal N(0,1)\).  
Even for large spreads (\(\sigma_{\text{syn}}=1\!-\!2\)), spike volleys still crossed all \(P\) populations, though they became slightly broader and slower (Fig.~\ref{fig:noise}A).

\paragraph{Memory overlap.}
Binary patterns were drawn with sparsity \(f\), giving an expected overlap of \(f^{2}\) between any two memories.  
For \(f=0.01\)–\(0.1\), the correct sequence re-emerged without mis-routing (Fig.~\ref{fig:noise}B), again only widening and slowing the bursts.

Both perturbations keep the input pair \((A_{\text{in}},\sigma_{\text{in}})\) inside the basin of attraction described in Section~\ref{subsec:synfire}, thereby preserving stereotyped feed-forward propagation.

\begin{figure}[t]
    \centering
    \includegraphics[width=\linewidth]{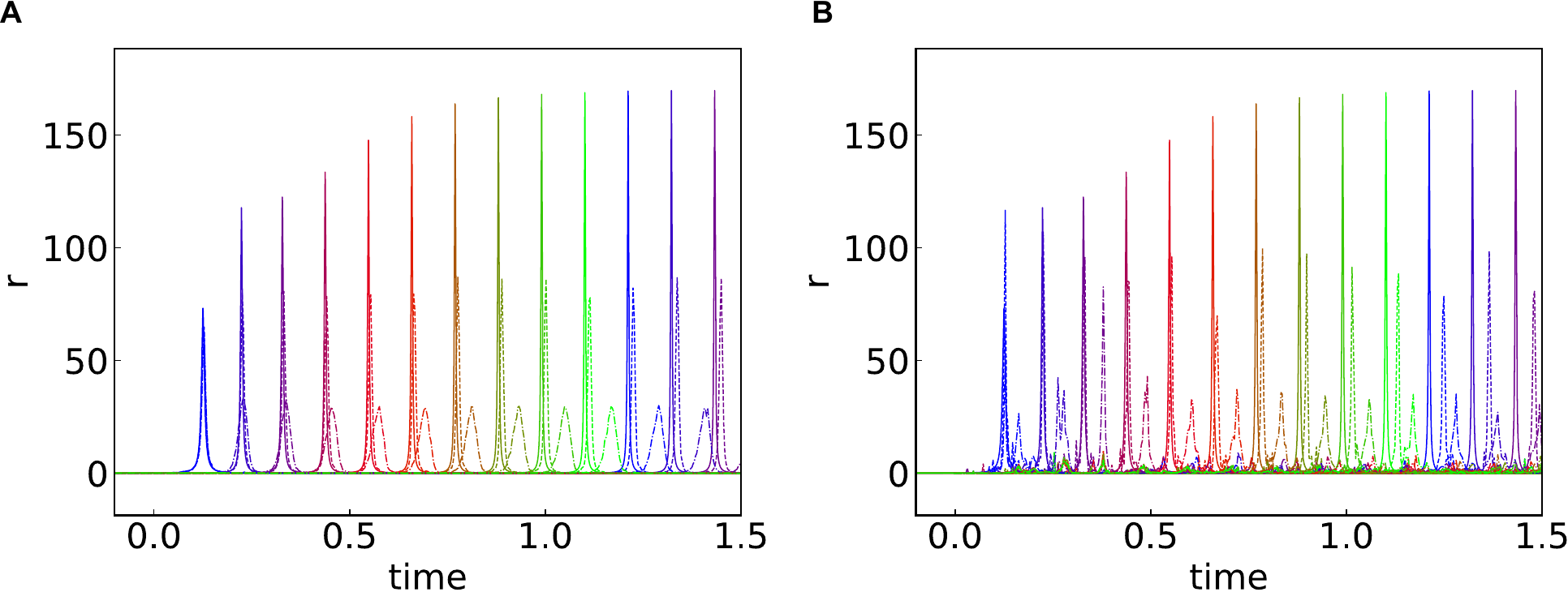}
    \caption{\textbf{Robustness of sequential propagation.}
    \textbf{A} Lognormal synaptic noise (\(\sigma_{\text{syn}}=1.0\), dotted; \(2.0\), dashed) broadens and slows volleys but does not halt them.  
    \textbf{B} Pattern overlap (sparsity \(f=0.01\), dotted; \(f=0.1\), dashed) likewise preserves the sequence.  
    Solid lines: unperturbed FRE prediction.}
    \label{fig:noise}
\end{figure}

%======================================================
\section{Discussion}
\label{sec:discussion}
We have shown that a compact QIF network equipped with a temporally asymmetric Hebbian (TAH) rule can reproduce a wide range of sequential dynamics while remaining fully amenable to analysis.  Leveraging an exact low-dimensional firing-rate reduction alongside large-scale spiking simulations, we identified four main findings:

\begin{itemize}
  \item \textbf{Bifurcation structure:} The reduced system possesses two coexisting fixed points—a low-rate node and a high-rate focus—whose stability is controlled by the sequential coupling $J_1$.
  \item \textbf{Stable synfire propagation:} When $J_1$ is sufficiently strong, a brief input launches self-sustaining spike volleys that move cleanly from one population to the next; a single-population input–output map explains their constant gain and width.
  \item \textbf{Replay-like bursts with IFA:} For $J_1\ll J_2$, a slow modulatory drive triggers transient spiral excursions around the focus, producing replay-like bursts whose ripple frequency decays within each event—recapitulating intra-ripple frequency accommodation (IFA) seen in vivo.
  \item \textbf{Robustness:}  Significant synaptic heterogeneity and pattern overlap broaden and slow the volleys but do not extinguish them, underscoring the model’s tolerance to biological variability.
\end{itemize}

Together, these results position QIF networks with TAH connectivity as a minimal yet predictive framework that unifies stable synfire chains and hippocampal replay within a single set of circuit principles.

\paragraph{Comparison to previous frameworks.}
Classical attractor-sequence models (e.g., \cite{sompolinsky1986Temporal,suri2002Spike}) embed sequences via asymmetric weights or slow adaptation but often require non-physiological timescales or strong adaptation to exit each attractor and avoid unwanted synchrony. 
Synfire chains (e.g., \cite{diesmann1999Stable,long2010Support}) achieve precise spike volleys, but timing jitter and parameter drifts can easily trigger runaway activity.
Rate-based sequential models (e.g., \cite{gillett2020Characteristics}) are powerful and offer a degree of analytical simplicity, yet they often struggle to capture precise synchronization and transient dynamics observed both in the brain and spiking models, such as damped oscillations \cite{schaffer2013ComplexValued}.

Our QIF network framework offers a compelling alternative. 
Its key advantage lies in the combination of biophysically plausible QIF neurons with an exact low-dimensional reduction to FREs\cite{montbrio2015Macroscopic}. 
Despite its analytical tractability, the model robustly reproduces diverse sequential patterns, including the transient spike dynamics which classical binary neurons and even standard rate models typically fail to reproduce \cite{montbrio2015Macroscopic,schaffer2013ComplexValued}. 
This allows our model to bridge the gap between abstract rate models and complex, detailed simulations.

\paragraph{Mechanistic insights.}
The simplicity of the FRE framework provides several crucial mechanistic insights into sequence generation. 

For example, while synfire-chain propagation constitutes a limit-cycle solution, whose stability generally transcends simple linear analysis of fixed points, the specific structure of our model (sparse activity and restricted conductivities) allows a tractable state-space analysis of a neural response. This analysis reveals how the invariance of response gain and latency underpins stable propagation, preventing both runaway excitation and decay. Besides, eigenvalue analysis of the FREs around the high-rate fixed point elucidates the conditions under which transient sequences can exist stably. 

Thus, direct linkage between network parameters and macroscopic activity described by FREs provides a clear and predictive understanding of the sequential activities.

\paragraph{Biological implications.}

Our QIF network not only generates ripple-frequency oscillations accompanied by transient replay, but also \emph{spontaneously} reproduces the hallmark phenomenon of \emph{intra-ripple frequency accommodation} (IFA)—the systematic deceleration of ripple frequency within a single event.  
IFA has been observed across species and preparations\cite{sullivan2011Relationships,ponomarenko2004Multiple,nguyen2009Characterizing} and is thought to tailor the temporal compression of replay so as to maximise spike-timing–dependent plasticity and facilitate memory consolidation\cite{schieferstein2024Intraripple}.  
Because mechanistic and functional explanations remain scarce, the ability of our minimal QIF framework to capture this nuanced dynamical signature suggests that it preserves the essential circuit ingredients required for \textit{in vivo} sequence generation.

The model also makes a clear testable prediction: introducing biological variability slows and broadens spike volleys (Fig.~\ref{fig:noise}).  Quantifying how replay speed and temporal precision covary with cellular- and synaptic-level noise therefore provides a straightforward experimental assay of our theoretical proposal.

\paragraph{Limitations and future work.}

The present model stores a fixed repertoire of sequences learned offline.  
Future work should endow the network with \emph{online} plasticity that can insert, update, or delete sequences without interference, and with gating mechanisms that enable context-dependent switching.  
Also, extending the framework to variable replay speeds, branched trajectories, and richer biophysical detail (short-term plasticity, intrinsic adaptation) will further test its flexibility.  

%======================================================
\section{Conclusion}
\label{sec:conclusion}
We proposed a parsimonious QIF network with temporally asymmetric Hebbian connectivity that stores and recalls neural sequences.  
Exact low-dimensional firing-rate equations reveal the bifurcation landscape that supports both stable synfire propagation and replay-like bursts with intra-ripple frequency accommodation, all without delay lines or strong adaptation.  
The mechanism remains robust under synaptic noise and overlapping memories.  
Thus, QIF networks with TAH connectivity offer an analytically transparent platform for studying the generation, stability, and modulation of sequential activity in spiking circuits.

\begin{credits}
\subsubsection{\ackname} 
The author gratefully acknowledges Kensuke Yoshida and all members of the Toyoizumi Lab for insightful advice and stimulating discussions. 
This work was funded by JSPS KAKENHI Grant Number 23KJ0666, JST CREST JPMJCR23N2, RIKEN Center for Brain Science, and RIKEN TRIP initiate (RIKEN Quantum). 
Heartfelt gratitude to my wife and newborn child for their unwavering support and inspiration.

This preprint has not undergone peer review or any post-submission improvements or corrections.
The Version of Record of this contribution will be published in the ICONIP 2025 proceedings.
The DOI will be added when available.
% This preprint has not undergone peer review (when applicable) or any post-submission improvements or corrections. The Version of Record of this contribution is published in [insert volume title], and is available online at https://doi.org/[insert DOI]

\subsubsection{\discintname}
The authors have no competing interests to declare that are
relevant to the content of this article. 

\subsubsection{Code availability} 
All simulation and analysis scripts used in this study are openly
available at 
\url{https://github.com/genkinanodesu/QIF-sequence}.
\end{credits}

% %======================================================
\bibliographystyle{splncs04}
\bibliography{qif-sequence}

\appendix
\numberwithin{equation}{section}

\section{Derivation of FREs}
\label{appendix:FREderivation}

In this appendix we sketch how the low-dimensional firing-rate equations (FREs),
used throughout the main text, arise from the microscopic dynamics of the
sequential-recall QIF network introduced in
Section~\ref{subsec:network}.  The derivation closely follows
Montbrió \textit{et al.}~\cite{montbrio2015Macroscopic}, who treated the
single-population case.

Starting point is the membrane-potential dynamics of neuron $i$,
\begin{align}
  \dot V_{i}
  &= V_{i}^{2} \;+\;\eta_{i}\;+\;J\,s_{k}(t)\;+\;I(t),
  \label{eq:micro}\\
  s_k(t)
  &= \frac1N \sum_{i=1}^{N}\sum_{f}\delta\!\bigl(t-t^{f}_{i}\bigr),
\end{align}
where $s_k(t)$ is the population spike train, $J$ the coupling strength,
and $I(t)$ an external drive.  Under three key assumptions, Montbrió
\textit{et al.} showed that the collective behaviour can be described by
two macroscopic variables—the firing rate $r(t)$ and the mean membrane
potential $v(t)$—which obey Eq.~\eqref{eq:FRE_single}:

\begin{enumerate}
  \item \textbf{Thermodynamic limit:} the number of neurons
        $N\to\infty$.
  \item \textbf{Lorentzian ansatz (LA):} for each intrinsic current
        $\eta$, the conditional potential density
        $\rho(V\!\mid\!\eta,t)$ is Lorentzian in $V$.
  \item \textbf{Cauchy-distributed excitabilities:}
        the intrinsic currents $\eta_{i}$ are i.i.d.\ samples from a
        Cauchy (Lorentzian) density $g(\eta)$.
\end{enumerate}

\paragraph{Extension to Hebbian/TAH memory.}
To incorporate our Hebbian/temporally asymmetric Hebbian (TAH)
connectivity, we take the sparse-memory limit in which overlaps between
stored patterns are negligible.  Neurons can then be partitioned into
$P$ disjoint populations,
\[
  \text{group }k \;:=\;
  \bigl\{\,\text{neuron }i \;\bigm|\; \xi_{i}^{k}=1 \bigr\},
\]
and neurons not belonging to any pattern are ignored.  Assuming identical
sparsity across patterns, each group contains $N$ neurons. In this setting,
\begin{itemize}
  \item the Hebbian term $W^{\mathrm{Hebb}}_{ij}$ acts as a recurrent
        connection within group $k$,
  \item the TAH term $W^{\mathrm{TAH}}_{ij}$ provides feed-forward coupling from $k$ to $k+1$,
  \item global inhibitory feedback $W^{\mathrm{inh}}_{ij}$ yields uniform all-to-all inhibition.
\end{itemize}

The resulting FREs for population~$k$ are
\begin{subequations}
\begin{align}
  \dot r_k &= \frac{\Delta}{\pi} + 2\,r_k\,v_k, \\[2pt]
  \dot v_k &= v_k^{2} + \bar\eta
             + J_1 r_{k-1}
             + J_2 r_k
             - J_3 \bar r
             - (\pi r_k)^{2},
\end{align}
\end{subequations}
where
\(
  \bar r = P^{-1}\sum_{l=1}^{P} r_l
\)
is the network-wide mean firing rate.  Setting $J_3 = J_1$ guarantees
that these equations share the same fixed points as the single-population
FREs~\eqref{eq:FRE_single}, simplifying the subsequent stability
analysis.

\section{Explicit formulation of eigenvalues of FREs}
\label{appendix:eigenvalues}
\paragraph{Full Jacobian matrix}
Computing 2nd order derivatives of RHS of (\ref{eq:FRE}), we have 
\begin{subequations}
\label{eq:2nd-derivation}
\begin{align}
    \pdv{\dot{r_k}}{r_i} &= 2\delta _{i,k} v_k, &  \pdv{\dot{r_k}}{v_i} &= 2\delta _{i,k} r_k, \\
    \pdv{\dot{v_k}}{r_i} &= J_1 \delta _{i, k-1} + (J_2 - 2\pi^2 r_k) \delta _{i,k} - \frac{J_1}{P}, & \pdv{\dot{v_k}}{v_i} &= 2\delta _{i,k} v_k.
\end{align}
\end{subequations}
Thus, the Jacobian matrix of (\ref{eq:FRE}) around $(r_k, v_k) = (r^*, v^*)$ (for all $k$) is a block-circulant matrix 
\begin{align}
    C &= \label{eq:jacobian}
    \begin{pmatrix}
        C_1-C_3 & C_2-C_3 & -C_3    & \dots   & -C_3  \\
        -C_3    & C_1-C_3 & C_2-C_3 &         & -C_3 \\
        \vdots  & -C_3    &C_1-C_3  &\ddots   & \vdots \\
        -C_3    &         &\ddots   & \ddots  & C_2-C_3\\
        C_2-C_3 & -C_3    &\dots    &  -C_3   &C_1-C_3
    \end{pmatrix} \\
    &= I_P \otimes (C_1-C_3) + P_P \otimes (C_2 - C_3) - (P_P^2 + P_P^3 + \cdots + P_P^{P-1}) \otimes C_3,
\end{align}
where $I_P$ is the $P\times P$ identity matrix, $P_P$ is the $P\times P$ permutation matrix, and 
\begin{align}
    \label{eq:jacobian_matrices}
    C_1 = 
    \begin{pmatrix}
    2v^* & 2r^* \\ C_2-2\pi^2 r^* & 2v^* 
    \end{pmatrix}
    , \quad C_2 = 
    \begin{pmatrix}
        0 &0 \\ C_1 & 0
    \end{pmatrix}
    , \quad C_3 = 
    \begin{pmatrix}
        0 & 0 \\ C_1/P & 0.
    \end{pmatrix}
\end{align}
\paragraph{Block-diagonalization of the Jacobian}

Let $F_P=(\zeta _P^{jk})_{j,k=0}^{P-1}/\sqrt{P}$ be the $P\times P$ unitary discrete Fourier matrix with primitive root $\zeta_P=e^{2\pi i/P}$.
It is well-known that the permutation matrix $P_P$ and its power can be diagonalized by $F_P$ as
\begin{align}    
  F_P^{\dagger}P_PF_P
  =\operatorname{diag}\!\bigl(\zeta_P^{\,k}\bigr)_{k=0}^{P-1},
  \qquad
  F_P^{\dagger}P_P^{\,m}F_P
  =\operatorname{diag}\!\bigl(\zeta_P^{\,km}\bigr)_{k=0}^{P-1}.
\end{align}

Similarly, we define $Q:=F_P\otimes I_2$ and applying to the block-circulant Jacobian~\eqref{eq:jacobian} yields
\begin{align}
  \Lambda =Q^{\dagger} C Q
  =\operatorname{diag}\!\bigl(B_0,B_1,\dots,B_{P-1}\bigr),
\end{align}
where each block $B_k\in\mathbb{C}^{2\times2}$ is  
\begin{align}
B_k =
  \bigl(C_1-C_3\bigr)
  +
  \zeta_P^k\bigl(C_2-C_3\bigr)
  -
  C_3\sum_{m=2}^{P-1}\zeta_P^{\,km}.
\end{align}

Because $\sum_{m=0}^{P-1}\zeta_P^{km}=0$ for $k\neq0$ and $=P$ for $k=0$,
\begin{align}
\label{eq:B_k}
  \boxed{\;
      B_k=
      \begin{cases}
        C_1 + C_2 - PC_3, & k=0,\\
        C_1+\zeta_P^{\,k}C_2, & k=1,\dots,P-1.
      \end{cases}}
\end{align}
Substituting $C_1$, $C_2$ and $C_3$ into \eqref{eq:B_k} yields

\begin{align}
  B_k=
  \begin{pmatrix}
    2v^* & 2r^*\\
    J_2-2\pi^2r^*+J_1\,(1-\delta_{k0})\,\zeta_P^{\,k} & 2v^*
  \end{pmatrix}.
\end{align}

\paragraph{Eigenvalues}
The characteristic polynomial of $B_k$ is
\[
  \det(\lambda I_2-B_k)
  \;=\;
  (\lambda-2v^*)^2
  -2r^*
    \Bigl(
      J_2-2\pi^2r^*+J_1(1-\delta_{k0})\zeta_P^{\,k}
    \Bigr).
\]

Hence, the eigenvalues of the full Jacobian $C$ are
\begin{align}    
    \lambda_k^{\pm}
    =2v^*
     \;\pm\;
     \sqrt{\,2r^*\,\Bigl(
       J_2-2\pi^2r^*
       +J_1(1-\delta_{k0})\zeta_P^{\,k}
     \Bigr)}
  \qquad k=0,1,\dots,P-1.
  \label{eq:eigen_formula}
\end{align}
For the spatially homogeneous mode $k=0$ the $J_1$ term indeed cancels,
while for $k\neq0$ it appears with a complex phase
$\zeta_P^{\,k}=e^{2\pi i k/P}$.

\paragraph{Stability criterion}
The linear stability of the fixed point $(r^*, v^*)$ can be determined by the largest value of $\operatorname{Re}\lambda_k^{\pm}$. 
\begin{lemma}
Let $a$ be a real number. Let $z \in \mathbb{C}$ vary such that $|z|=1$.
Let $w = (z+a)^{1/2}$, and we have
\begin{align}
\max \operatorname{Re} w &= \begin{cases} \sqrt{a+1} & (a \geq -1/2) \\ \frac{1}{2\sqrt{|a|}} & (a < -1/2) \end{cases} 
\end{align}
\end{lemma}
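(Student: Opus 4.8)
The plan is to parametrize the unit circle as $z=e^{i\theta}$ and reduce the problem to a one-variable optimization. Writing $u:=z+a$, as $z$ runs over $|z|=1$ the point $u$ traces the circle of radius $1$ centered at the real point $a$, so that $\operatorname{Re} u = a+\cos\theta$ and $|u|=\sqrt{a^2+2a\cos\theta+1}$. Because $w=(z+a)^{1/2}$ is two-valued and the two branches differ only by sign, the maximal real part is attained on the principal branch, for which $\operatorname{Re}\sqrt{u}=\sqrt{(|u|+\operatorname{Re} u)/2}$. Hence maximizing $\operatorname{Re} w$ is equivalent to maximizing $|u|+\operatorname{Re} u$.

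Setting $c:=\cos\theta\in[-1,1]$, the task becomes the maximization of
\[
  g(c)=\sqrt{a^2+2ac+1}+a+c,\qquad c\in[-1,1].
\]
Differentiating gives $g'(c)=a\bigl(a^2+2ac+1\bigr)^{-1/2}+1$. The stationary condition $g'(c)=0$ forces $a<0$ and yields the unique interior candidate $c^{*}=-1/(2a)=1/(2|a|)$, at which the radicand collapses neatly to $a^2+2ac^{*}+1=a^2$. Since for $a<0$ the radicand decreases in $c$, the term $a(a^2+2ac+1)^{-1/2}$ decreases, so $g'$ is monotone and $c^{*}$ is a maximum whenever it lies in $[-1,1]$.

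The case split then comes from whether $c^{*}\le 1$, which holds exactly when $|a|\ge 1/2$, i.e. $a\le -1/2$. For $a\ge -1/2$ (including every $a\ge 0$) the stationary point lies at or beyond the endpoint, so $g'\ge 0$ on $[-1,1]$ and the maximum sits at $c=1$; there $g(1)=|a+1|+a+1=2(a+1)$, giving $\operatorname{Re} w=\sqrt{a+1}$. For $a<-1/2$ the interior maximum gives $g(c^{*})=|a|+a+c^{*}=c^{*}=1/(2|a|)$, whence $\operatorname{Re} w=\sqrt{c^{*}/2}=1/(2\sqrt{|a|})$, matching the two stated cases.

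The two genuinely load-bearing steps are (a) correctly selecting the branch so that the identity $\operatorname{Re}\sqrt{u}=\sqrt{(|u|+\operatorname{Re} u)/2}$ may be invoked, and (b) the boundary-versus-interior case analysis. The main obstacle I expect is making the optimization rigorous: one must verify via the monotonicity of $g'$ (or a second-derivative check) that the unique stationary point is a global maximum rather than a minimum, and confirm that the threshold $|a|=1/2$ is precisely where $c^{*}$ crosses the endpoint $c=1$, so that the breakpoint of the two formulas lands exactly at the stated $a=-1/2$.
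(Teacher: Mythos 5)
Your proof is correct, and it takes a genuinely different route from the paper's. The paper works implicitly: writing $w=u+iv$, it derives the locus equation $|w^{2}-a|=1$, i.e.\ $(u^{2}+v^{2})^{2}-2a(u^{2}-v^{2})+(a^{2}-1)=0$, substitutes $X=u^{2}$, $Y=v^{2}$, and determines the admissible range of $X$ by asking when the resulting quadratic in $Y$ has a nonnegative root --- the ``standard but somewhat tedious algebra'' of a discriminant and root-location analysis. You instead parametrize the circle explicitly, invoke the half-angle identity $\operatorname{Re}\sqrt{u}=\sqrt{(|u|+\operatorname{Re}u)/2}$ (valid on the principal branch, which indeed carries the maximum since its real part is nonnegative and the other branch is its negative), and reduce everything to the one-variable problem $\max_{c\in[-1,1]}\bigl(\sqrt{a^{2}+2ac+1}+a+c\bigr)$. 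Your branch selection, the monotonicity of $g'$ establishing that the unique stationary point $c^{*}=1/(2|a|)$ is a global maximum, and the endpoint-versus-interior dichotomy at $|a|=1/2$ are all sound (the radicand stays strictly positive at the relevant optimizers, so no degeneracy arises). What your approach buys is transparency: it exhibits the maximizing $z$ explicitly ($\cos\theta=1/(2|a|)$ in the interior regime, $z=1$ otherwise) and replaces the two-variable algebraic elimination with elementary calculus; the paper's approach, in exchange, yields $\min X$ as a byproduct and avoids any discussion of branches of the square root.
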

\begin{proof}
Let $w = u + iv$. Since $|z|=1$, the locus of $w$ is such that $|w^2-a|=1$.
Rewriting in terms of $u$ and $v$ yeilds
$$ (u^2 + v^2)^2 - 2a(u^2 -v^2) + (a^2-1) =0. $$
Furthermore, let $X=u^2$ and $Y=v^2$. We then seek the range of $X \geq 0$ for which the following equation in $Y$:
$$ Y^2 + 2(X+a) Y + X^2-2aX + a^2-1 = 0 $$
has a solution in the range $Y \geq 0$.

After some standard (but somewhat tedious) algebra, one yields 
\begin{align*}
\max X &= \begin{cases} a+1  &(a \geq -1/2) \\ \frac{1}{4|a|} & (a < -1/2),  \end{cases} \\
\min X &= \begin{cases} a-1  &(a \geq 1) \\ 0 & (a < 1). \end{cases}
\end{align*}

Since $\max \operatorname{Re} w = \max u = \max \sqrt{X}$, this concludes the proof.
\end{proof}

Obtaining the leading growth rate \eqref{eq:growthrate} is now straightforward by applying the lemma to the formula of eigenvalues \eqref{eq:eigen_formula} and evaluating $\lim _{P\rightarrow \infty }\max_{k}\,\mathrm{Re}\,\lambda_{k}^{\pm} $.

% --- End of Document ---
\end{document}